\newtheorem*{theorem*}{Theorem}
\newtheorem*{corollary*}{Corollary}
\begin{document}

\thispagestyle{empty}  

\centerline{On the Free Will Theorem}
\bigskip
\centerline{Simon Kochen}
\bigskip
\medskip

We discuss a form of the Free Will Theorem which is simpler and more
general than the original in Conway and Kochen~\cite{C-K}.  One reason that it is
more transparent is that the new proof is invariant, and does not use
Lorentz frames, as  [1] did.  The main difference is that the original
theorem was conditional:  the particle's spontaneity was based on the
free will of the experimenter.  The new theorem is absolute, since it
does not require any use of free will on the experimenter's part.  The
theorem also strengthens the Kochen-Specker Theorem [3], which shows
that non-contextual deterministic theories cannot predict the results
of certain experiments.  By using EPR correlations between these
experiments, we show that even contextual deterministic theories are
unable to predict the experimental result.

We first recall the two axioms from [1] which we shall use.  For a spin
$1$ system, the projections $S_x^2$, $S_y^2$, $S_z^2$ commute, and
$S_x^2 + S_y^2 + S_z^2 = 2$, so the triple $(S_x^2, S_y^2, S_z^2)$ has
the eigenvalues $(1,0,1)$, $(0,1,1)$, or $(1,1,0)$.

The projection $S_z^2$ can be measured by using an inhomogeneous
electric field in place of the magnetic field in the Stern-Gerlach
experiment.  Note that an electric field has a polar field vector, which
does not distinguish between positive and negative spin values, and so
measures $S_z^2$, but not $S_z$.  Such an experiment was carried out by
E.~Wrede in~\cite{W}.  

We may measure $S_x^2$, $S_y^2$, and $S_z^2$ successively,  using
a  Wrede apparatus, without disturbance, since they commute. We 
may also measure them simultaneously by measuring $S_x^2 - S_y^2$, which has the
eigenvalues $1$, $-1$ and $0$, implying that $S_x^2$, $S_y^2$, and
$S_z^2$ has eigenvalue $0$ respectively.

This gives us the 
\medskip

\noindent {\bf SPIN Axiom:} Measurements of the squared components of spin $(S_x^2, S_y^2, S_z^2)$
of a spin $1$ particle in three orthogonal directions $x$, $y$, $z$ give
the results $(1,0,1)$, $(0,1,1)$, or $(1,1,0)$.  
\medskip

For the second axiom we use a form of EPR for two spin $1$ particles
with total spin $0$.  Then $S_z$ has eigenvalues $1$, $-1$, $0$ for the
first particle if and only if $S_z$ has eigenvalues $-1$, $1$, $0$ for
the second, so $S_z^2$ has the same eigenvalue for the two particles.
The Twin axiom is a version of this result.
\medskip

\noindent {\bf TWIN Axiom:}
For two spin $1$ particles $a$ and $b$ of total spin zero, suppose
experimenter $A$ performs the triple experiment of measuring $(S_x^2,
S_y^2, S_z^2)$ on $a$, while experimenter $B$ performs the single 
experiment of measuring $S_w^2$ on $b$.  Then, if $w$ is in the same
direction as one of $x$, $y$, and $z$, experimenter $B$'s measurement
will yield the same answer as the corresponding measurement of $A$.
\medskip

We call the measurement of the three projections $(S_x^2, S_y^2,
S_z^2)$ a {\em triple experiment\/} and the measurement of the single
projection $S_w^2$, a {\em single experiment\/}.

We assume a locality result from special relativity.
\bigskip

\noindent {\bf FIN Axiom:}
The speed with which the signal that the measured value $(1,0,1)$,
$(1,1,0)$, or $(0,1,1)$ of the triple $(S_x^2, S_y^2, S_z^2)$ can be
transmitted has an upper bound $c$, the speed of light in a 
vacuum.
\medskip

We shall use a mathematical result from Kochen and Specker~\cite{K-S}.  A
version of the theorem asserts that we can specify 33 directions for
which there is no function into $(0,1)$ with the value $(1,0,1)$,
$(1,1,0)$, or $(0,1,1)$ on every one of the $40$ orthogonal triples
formed out of the 33 directions.

The 33 directions are the axes of symmetry of the three cubes of the
$3$-cube polyhedron, illustrated below, obtained by rotating a
coordinate cube by $45{}^\circ$ about the three coordinate
axes.\footnote[1]{The original theorem used in 117 directions;
the smallest number known is 31 (by Conway and Kochen).  I am using the
33 directions because of their simple geometric description. For a proof
see Peres~\cite{P}; for a co-ordinate-free proof see Conway and Kochen [1].
It has been proved that at least 21 directions are needed for the
theorem.}\par

 \begin{figure*}[h]
\centering \includegraphics{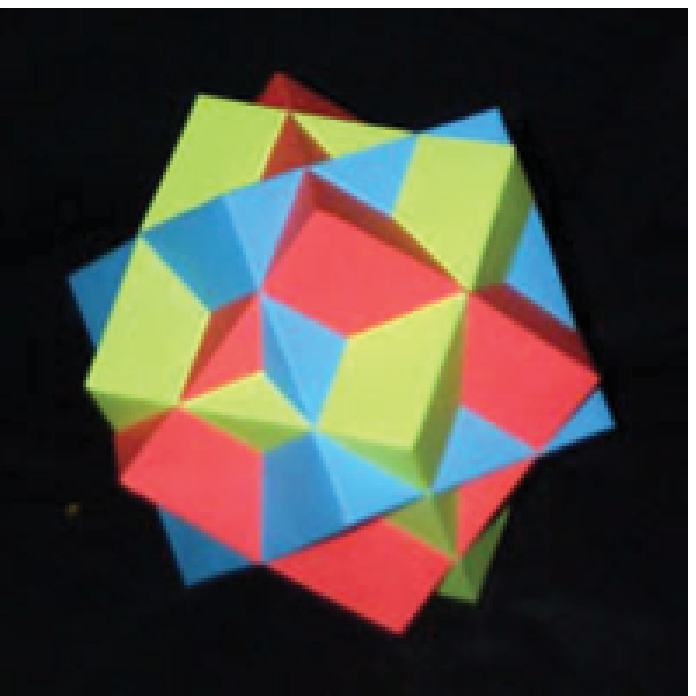}
 \end{figure*}
 \bigskip

We describe a specific ordering, the {\em lexical ordering\/}, of the
$40$
orthogonal triples of the 33 axes of symmetry.  We first order the 13
axes of symmetry of the co-ordinate cube as follows:
\[
100,010,001,110,101,10\underline{1},1\underline{1}0,011,01\underline{1},111,11\underline{1},1\underline{1}1,\underline{1}11.
\]
Here $\underline{1}$ means $-1$ and $abc$ denotes the axis joining the
point $(a,b,c)$ to the origin. We then use this order on each of the three 
rotated cubes of the $3$-cube polyhedron, and order the three cubes by 
giving order priority to the rotation of $45{}^\circ$ about the $x$ axis 
over the rotation about the $y$ axis, and the $y$ axis rotation over 
the $z$ axis rotation.

We now eliminate the six axes that are repetitions of their earlier
appearances in the ordering.  This gives an ordering of the 33 axes of
symmetry of the three cubes of the $3$-cube.  Finally, we order the $40$
orthogonal triples of the 33 axes by giving priority to the first
co-ordinate over the second, and the second over the third.  The result
is the lexical ordering of the $40$ orthogonal triples of symmetry axes of
the cube:
$(\mathbf{x}_1,\mathbf{y}_1,\mathbf{z}_1),\ldots,(\mathbf{x}_{40},\mathbf{y}_{40},\mathbf{z}_{40})$.  

We shall use the following remarks in the proof.  

A point in space-time is given by a $4$-vector $\rho =
(\mathbf{r}\negthinspace ,ct)$.
Every point $\rho_1$ is the vertex of a light cone, defined by all such
$\rho$ such that $|\rho-\rho_1|^2 = 0$, i.e.
\[
|\mathbf{r}-\mathbf{r}_1|^2 - (ct-ct_1)^2 = 0.
\]
\noindent The light cone with vertex at a given space-time point is
invariant, and does not depend on the choice of inertial frame, so
different observers see the same light cone at each point.

We shall say that a point lies {\em in\/} a light cone to mean that the
point lies on the light cone or in its interior.

The occurrence of an event has extension in space-time.  As is usual, we
shall idealize the occurrence of an event to a point in space-time.  Our
proof remains valid without this idealization.

We call this result below the {\em Unconditional\/} (or {\em Absolute\/)}
Free Will Theorem

\begin{theorem*}
The Spin, Twin, and Fin axioms imply that the particle's response to the
$40$ triple experiments is {\em free\/}, in the sense that it is not
determined by the events in the past light cone of that event.
\end{theorem*}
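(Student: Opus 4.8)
The plan is to argue by contradiction and reduce to the Kochen--Specker statement quoted above. Suppose the theorem fails, so that particle $a$'s response to each of the $40$ triple experiments is determined by the events in the past light cone of the event $P_a$ at which the triple experiment is performed. Treating the choice of which of the $40$ triples is measured as an input, this gives a function $i\mapsto\theta_a(i)$ on $\{1,\dots,40\}$ whose values are fixed by the past-light-cone data at $P_a$, with $\theta_a(i)$ equal to one of $(1,0,1)$, $(0,1,1)$, $(1,1,0)$ by the Spin axiom. The goal is to show that $\theta_a$ is \emph{non-contextual} --- that the entry of $\theta_a(i)$ in a given symmetry axis depends only on that axis and not on which of the $40$ triples $i$ contains it --- since Kochen--Specker forbids such an assignment, and that is the desired contradiction.

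To extract non-contextuality I would bring in a twin. Let $b$ form a total-spin-zero pair with $a$ and be sent to a distant experimenter $B$ who performs a single experiment $S_w^2$ at an event $P_b$. I would arrange the worldlines and timings so that $P_a$ and $P_b$ are spacelike separated, and more precisely so that $A$'s selection of the triple lies outside the past light cone of $P_b$ while $B$'s selection of $w$ lies outside the past light cone of $P_a$; by the Fin axiom the value $(1,0,1)$, $(1,1,0)$ or $(0,1,1)$ recorded on either side is then never among the past-light-cone events of the other side. Under the deterministic account the theorem is meant to exclude, $b$'s response is likewise a function $w\mapsto\theta_b(w)\in\{0,1\}$ of the past-light-cone data at $P_b$. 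Because that data carries no record of $A$'s choice, $\theta_b(w)$ does not depend on $i$; and because the past-light-cone data at $P_a$ carries no record of $B$'s choice, $\theta_a(i)$ does not depend on $w$.

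Now I apply the Twin axiom: whenever $w$ is one of the three directions of the triple indexed by $i$, the entry of $\theta_a(i)$ in direction $w$ equals $\theta_b(w)$. Since the right-hand side does not depend on $i$, the entry of $\theta_a(i)$ in a direction $w$ is the same for every triple $i$ containing $w$, so $\phi(w):=\theta_b(w)$ is a well-defined $\{0,1\}$-valued function on the $33$ symmetry axes. By the Spin axiom $\phi$ takes one of the values $(1,0,1)$, $(1,1,0)$, $(0,1,1)$ on each of the $40$ orthogonal triples $(\mathbf{x}_i,\mathbf{y}_i,\mathbf{z}_i)$ --- precisely the sort of function the Kochen--Specker result rules out. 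This contradiction proves the theorem.

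The step I expect to be the crux is the passage to non-contextuality in the second paragraph: justifying, from the Fin axiom and the spacelike placement of the two experiments alone, and without assuming that the experimenters act freely, that $\theta_a(i)$ cannot depend on $B$'s choice of $w$ and $\theta_b(w)$ cannot depend on $A$'s choice of triple. This forces one to be careful about the meaning of ``determined by the events in the past light cone'': the selection of experiment on each side must be separated out as a parameter, and one must argue that the residual past-light-cone data that actually fixes a particle's response is information common to the two sides, which can therefore carry no trace of the far experimenter's selection, since conveying such a trace would violate the Fin bound. Once this is secured, the Spin axiom and the combinatorics of the $40$ orthogonal triples finish the argument exactly as in Kochen--Specker.
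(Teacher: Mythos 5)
Your proposal is correct and follows essentially the same route as the paper: separate the instrument settings out as explicit arguments of the hypothesized response functions, use the spacelike separation (with Fin) to drop each side's dependence on the far setting, then combine the Twin and Spin axioms to turn $\theta_b$ into a $\{0,1\}$-valued function on the 33 symmetry axes taking a value of type $(1,0,1)$ on all 40 orthogonal triples, contradicting the Kochen--Specker result. The only element of the paper's argument you omit is its concluding description of the deterministic keypad procedure by which experimenter $B$ exhibits the contradiction without making any choice, which serves the paper's ``unconditional'' (no experimenter free will) emphasis rather than the bare non-determination claim you prove.
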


\begin{proof}
Two spin $1$ particles $a$ and $b$ have been prepared to have total spin
$0$.\phantom\qedhere
\end{proof}

Experimenter A performs one of the $40$ triple experiments on particle $a$
and  B  performs one of the 33 single experiments on particle $b$.

Assume, contrary to the theorem, that particles $a$ and $b$'s responses
are functions $\theta_a(\alpha)$ and $\theta_b(\beta)$ of properties
(events) $\alpha$ and $\beta$ ranging in their respective past light
cones.  Thus, 

\begin{align*}
\theta_a(\alpha) &= (1,0,1) (0,1,1),\   \mathrm{or}\   (1,1,0)\\
\theta_b(\beta) &= 1\   \mathrm{or}\    0
\end{align*}

1.  We shall first show how to reduce the domains of
$\theta_a$ and $\theta_b$.

One of the properties on which $b$'s response depends is the
disposition of B's electromagnetic to give a field which is
inhomogeneous in the direction $\mathbf{w}$.  (In order to measure
$S_w^2$ for particle $b$).  We shall denote this property in short by
the vector $\mathbf{w}$.   Similarly, we denote the properties of A's
electromagnet in directions $\mathbf{x}$, $\mathbf{y}$, and $\mathbf{z}$
by the vectors $\mathbf{x}$, $\mathbf{y}$, and $\mathbf{z}$.

Aside from this explicit functional dependence on $\mathbf{x}$,
$\mathbf{y}$, and $\mathbf{z}$, and $\mathbf{w}$, $\theta_a$ may also
implicitly be a function of $\mathbf{x}$, $\mathbf{y}$, $\mathbf{z}$,
and $\mathbf{w}$ via properties, $\mu$ $(\mathbf{x}, \mathbf{y},
\mathbf{z}, \mathbf{w})$ which are themselves functions of these
variables.  We can make this implicit dependence of $\theta_a$ explicit
by inserting all such functions in $\theta_a$: 
\[
\theta_a(\mathbf{x},\mathbf{y},\mathbf{z},\mathbf{w},\mu(\mathbf{x},\mathbf{y},\mathbf{z},\mathbf{w});
\alpha'),
\]
where all the $\alpha'$ that remain of the $\alpha$ are not functions of
$\mathbf{x}$, $\mathbf{y}$, $\mathbf{z}$, and $\mathbf{w}$.

To simplify notation, we continue to call this new function
$\theta_a(\mathbf{x},\mathbf{y},\mathbf{z},\mathbf{w};\alpha')$.
Similarly, we obtain a function
$\theta_b(\mathbf{x},\mathbf{y},\mathbf{z},\mathbf{w};\beta')$ where the
$\beta'$ are not functions of $\mathbf{x}$, $\mathbf{y}$, $\mathbf{z}$,
and $\mathbf{w}$.

Now, the $\alpha'$ lie in the past light cone of $a$, and so these
properties will take on specific values $\alpha'_0$ before $a$'s
response, resulting in the function
$\theta_a(\mathbf{x},\mathbf{y},\mathbf{z},\mathbf{w};\alpha'_0)$.  The
$\alpha'_0$'s consist of the constant values $1$ or $0$.  We drop these
constants, and call the resulting function
$\theta_b(\mathbf{x},\mathbf{y},\mathbf{z},\mathbf{w})$. Similarly, we
call $b$'s response function $\theta_b(\mathbf{x}, \mathbf{y}, \mathbf{z},
\mathbf{w})$.

Let $a$ and $b$ lie at the $4$-vectors $\rho_1$ and $\rho_2$
respectively.  We now assume that the point $\rho_2$ lies outside the
past light cone of the point $\rho_1$.  Then $\mathbf{w}$ is not in the
domain of $\theta_a$, and we may drop the $\mathbf{w}$, and write $a$'s
response as $\theta_a(\mathbf{x},\mathbf{y},\mathbf{z})$.

A point $\rho_2$ lies outside the light cone of $\rho_1$ if and only if
$|\rho_1 - \rho_2|^2\ge 0$.  Since $|\rho_1-\rho_2|^2 =
|\rho_2-\rho_1|^2$, $\rho_2$ lies outside the light cone of $\rho_1$ if
and only if $\rho_1$ lies outside the light cone of $\rho_2$.  Thus, $\rho_1$
lies outside the light cone of $\rho_2$.  It follows that $\mathbf{x}$,
$\mathbf{y}$, and $\mathbf{z}$ are not in the domain of $\theta_b$, so
we may write $b$'s response as $\theta_b(\mathbf{w})$.
\smallskip

2.  We now give a description of an apparatus that makes a
measurement that contradicts the prediction given by the
function $\theta_b(\mathbf{w})$.

A computer is connected to a {\em Wrede-type\/}  electromagnet, which
makes three  measurements of $(S_x^2, S_y^2, S_z^2)$ for particle $b$. A
keypad on the computer has $40$ keys, labelled $1,\ldots,40$.  A switch
activates the connection to the electromagnet, so that pressing the
$i^{\mathrm{th}}$ key leads to the rotation of the electromagnet in the
three successive directions $\mathbf{x}_i$, $\mathbf{y}_i$, and
$\mathbf{z}_i$ (where $(\mathbf{x}_i,\mathbf{y}_i,\mathbf{z}_i)$ is the
$i^{\mathrm{th}}$ in the lexical order of the $40$ orthogonal triples), in
order to make three measurements of $(S_{x_i}^2,S_{y_i}^2,S_{z_i}^2)$
of particle $b$.  At the same time, the triple of values
$(\theta_b(\mathbf{x}_i),\theta_b(\mathbf{y}_i),\theta_b(\mathbf{z}_i))$
appears on the screen of the computer.

Now, by the Twin Axiom, for any orthogonal triple
($\mathbf{x},\mathbf{y},\mathbf{z})$,
\begin{equation}
\theta_a(\mathbf{x},\mathbf{y},\mathbf{z}) =
(\theta_b(\mathbf{x}),\theta_b(\mathbf{y}),\theta_b(\mathbf{z})) 
\end{equation}
\noindent The Spin axiom implies that $\theta_a$ has the values
$(1,0,1)$ in some order.  It follows by (1) that $\theta_b$ is a
function on the 33 axes of symmetry that has the values of $(1,0,1)$ in
some order on each of the $40$ orthogonal triples, contradicting the cited
theorem on the $3$-cube.  Thus, there exists an orthogonal triple on
which $\theta_b$ is not $(1,0,1)$, $(1,1,0)$, or $(1,1,0)$.

We shall show that experimenter $B$ will make this contradiction to the
existence of the function $\theta_b$ explicit, without any use of choice
or free will.

To carry out the experiment, $B$ presses the successive keys of the
keypad, starting from $1$.  He stops at the first key, say $k$, for
which the screen shows a triple that is {\em not\/} $(1,0,1)$,
$(0,1,1)$, or $(1,1,0)$, and activates the electromagnet to measure the
triples $(S_{x_k}^2, S_{y_k}^2, S_{z_k}^2)$.

By the Spin Axiom, this measurement must give the triple value
$(1,0,1)$, $(0,1,1)$, or $(1,1,0)$, contradicting the value
given by the function $\theta_b$.

This contradiction shows that no deterministic function
$\theta_b(\mathbf{w})$ or deterministic theory can correctly predict
the results of the $40$ experiments.  In fact, not only a determinist,
but any prognosticator, whether it is a Greek oracle or an omniscient
god, must contradict the results of the 40 experiments.  All that is
required for the contradiction is the existence of the pair of
mathematical functions $\theta_a$ and $\theta_b$ on the axes of
symmetry.

We have been painstakingly explicit in describing this experiment in 
order to make clear that the experimenter $B$ does not make any choices 
in doing the experiment.  In fact, he could be completely replaced by 
a computer.

This shows that no free will on the part of the experimenter is assumed
in this proof.  The original Free Will Theorem assumed the free will
choice by experimenter $B$ of one experiment among the $40$, and so could
not refute a doubter of human free will.

We may extend the theorem to events outside the past light cone to all
points of space-time by invoking causality.

The issue of causality is a hoary philosophical subject, which we shall
not touch upon.  We shall only invoke one generally agreed upon property
of causality, sometimes called the principle of causality, that past
events cannot be the effects of future causes.

We now state the principle in its relativistic version.
\medskip

\noindent {\bf  Relativistic Causality Principle}. 
A cause must precede its effect in all inertial frames.  
\medskip

In the setting of the theorem, since events outside the particle's past
light cone are in the particle's future in some inertial frame, the
Relativistic Causality Principle implies that events outside the
particle's past light cone cannot be the cause of the particle's
response.  The theorem shows that events in the past light cone do not
determine, and a fortiori, do not cause the particle's response.  This
yields the following consequence.

\begin{corollary*}
Spin, Twin, Fin, and the Relativistic Causality Principle imply that the
spin particle's response to the triple experiments is not caused by any
other events.
\end{corollary*}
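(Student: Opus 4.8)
The plan is to partition the space-time events into those lying in the past light cone of the particle's response event $\rho_1$ and those lying outside it, and to exclude causal influence from each class separately, using the Theorem for the first class and the Relativistic Causality Principle for the second.

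For the first class, I would simply invoke the Theorem: Spin, Twin, and Fin imply that the particle's response is not determined by the events in its past light cone. Causation is a stronger relation than determination --- were some event a cause of the response, the response would in particular be a function of the totality of such causes, hence determined by past-light-cone events, contradicting the Theorem. So no event in the past light cone of $\rho_1$ causes the response; this is the \emph{a fortiori} step already flagged in the text.

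For the second class, let $\rho_2$ be any point outside the past light cone of $\rho_1$, i.e.\ $\rho_2$ is spacelike separated from $\rho_1$ or lies in $\rho_1$'s future. In the future-timelike case $\rho_2$ does not precede $\rho_1$ in any inertial frame; in the spacelike case a suitable boost (relativity of simultaneity) exhibits an inertial frame in which $\rho_2$ follows $\rho_1$. Either way there is an inertial frame in which the event at $\rho_2$ does not precede the response at $\rho_1$, so by the Relativistic Causality Principle --- a cause must precede its effect in \emph{all} frames --- the event at $\rho_2$ cannot be a cause of the response. Since the two classes exhaust all of space-time, nothing causes the particle's response, which is the assertion of the Corollary.

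The only delicate point is the claim that a spacelike-separated event can always be brought into $\rho_1$'s causal future by a change of inertial frame; this is the standard relativity-of-simultaneity fact, and it is precisely the ingredient that the invariance of the light cone (emphasized in the remarks before the Theorem) was set up to supply, so I would cite it rather than reprove it. It is also worth stating explicitly that ``not caused by any other events'' here means ``not caused by the event at any other space-time point,'' in keeping with the idealization of events as points adopted earlier.
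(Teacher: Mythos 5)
Your proposal is correct and follows essentially the same route as the paper: events in the past light cone are excluded by the Theorem (the \emph{a fortiori} step, since a cause would in particular determine), and events outside it are excluded by the Relativistic Causality Principle, using the fact that any such event fails to precede the response in some inertial frame. Your explicit treatment of the spacelike versus future-timelike cases merely spells out what the paper compresses into one sentence.
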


We end with some consequences of the theorem.
\smallskip

1. A Strengthened Kochen-Specker Theorem.  The KS Theorem shows that
non-contextual deterministic theories contradict the Spin Axiom for a
spin $1$ particle.  By adding EPR, in the form of the Twin Axiom and
using special relativity, the above proof shows that for a spin $1$
particle which is part of two spin $1$ particles of total spin $0$,
assuming contextuality does not help deterministic theories.\par

Contextuality allows the value of $S_z^2$ to vary in he two
triple experiments measuring $(S_x^2, S_y^2, S_z^2)$ and $(S_{x'}^2,
S_{y'}^2, S_z^2)$.  However the Twin Axiom implies that the two values of 
$S_z^2$ are in fact equal, since they are the same as the value of
$S_z^2$ for particle $b$.  Thus, deterministic theories, whether
contextual or non-contextual, cannot predict the measurement results of
the $40$ experiments in this setting of correlated particles.
\smallskip

2. In the proof we have used the correlations of EPR for two spin
particles of total spin $0$ in a relativistic context.  There are
physicists who maintain that EPR yields instantaneous effects rather
than correlations.  We believe this is a mistaken view of EPR, and have
discussed this point in Kochen~\cite{K}.
\smallskip

3. How is our theorem relevant to human free will?  For two thousand
years, philosophers have proposed variations of an argument against the
existence of free will.  In a nutshell, this argument says that both
determinism and randomness are incompatible with free will.  Now the
above theorem shows that determinism is contradicted by the results of
experiments, and so cannot be used against the existence of free will.
On the other hand, randomness has a place in quantum theory.  For
instance, the values of a component of spin of a spin $1/2$ particle,
such as an electron, form a random sequence.  However, as many have
argued, randomness cannot help in establishing free will.  This is
basically because tossing a coin to determine a course of action does
not help in establishing free will. 

We claim that it is not randomness but {\em quantum correlations\/} that
play a crucial role.  For instance, two electrons with random spins
interact in a helium atom, and may form a spin $1$ atom of orthohelium
or a spin $0$ atom of parahelium, neither of which yield random
sequences.  Again, two spin $1$ atoms with total spin $0$ form a system
that is strongly correlated, and is far from yielding random sequences.

Note that in the above proof of our theorem, no pr obability, let alone
randomness, was used at all, only the above correlations (which imply
the Spin and Twin axioms).  It is not simply the randomness of particles,
but correlated probabilities between them that requires the spontaneity
of particles, and hence the possibility of human free will.

More speculatively, if the brain is a quantum neural net (with 
qbits in place of bits), then correlations of the kind used in our
proof would yield the spontaneity that may be the precursor of our
free will.

\end{document}